\let\theoremstyle\relax
\newtheorem{theorem}{Theorem}
\newtheorem{lemma}{Lemma}
\theoremstyle{definition}
\newtheorem{ex}{Example}
\newtheorem{defin}{Definition}
\newcommand{\bmtx}{\begin{bmatrix}}
\newcommand{\emtx}{\end{bmatrix}}
\newcommand{\bsmtx}{\left[ \begin{smallmatrix}} 
\newcommand{\esmtx}{\end{smallmatrix} \right]}
\newcommand{\bmatarray}[1]{\left[\begin{array}{#1}}
\newcommand{\ematarray}{\end{array}\right]}
\newcommand{\field}[1]{\mathbb{#1}}
\newcommand{\R}{\field{R}}
\newcommand{\Csafe}{\mathcal{C}}
\begin{document}


\title{Control Barrier Functions With Unmodeled Dynamics \\
Using Integral Quadratic Constraints}


\author{Peter Seiler, Mrdjan Jankovic, and Erik Hellstrom
  \thanks{Peter Seiler is with the Electrical Engineering and Computer
    Science Department, University of Michigan, Email: {\tt\small
      pseiler@umich.edu}.  }
  \thanks{Mrdjan Jankovic and Erik Hellstrom are with Ford Research and Advanced
    Engineering, P.O. Box 2053, MD 2036 SRL, Dearborn, MI 48121, USA,
    Emails: {\tt\small \{mjankov1,jhells11\}@ford.com} }
}
\date{}
\maketitle

\begin{abstract}
  This paper presents a control design method that  achieves
  safety for systems with unmodeled dynamics at the plant input.  The
  proposed method combines control barrier functions (CBFs) and
  integral quadratic constraints (IQCs).  Simplified, low-order models
  are often used in the design of the controller.  Parasitic,
  unmodeled dynamics (e.g. actuator dynamics, time delays, etc) can
  lead to safety violations.  The proposed method bounds the
  input-output behavior of these unmodeled dynamics in the time-domain
  using an $\alpha$-IQC. The $\alpha$-IQC is then incorporated into the
  CBF constraint to ensure safety.  The approach is demonstrated
  with a simple example.
\end{abstract}


\section{Introduction}
\label{sec:intro}

This paper focuses on the design of control barrier functions (CBFs)
for systems with unmodeled dynamics at the plant input. CBFs are used
to design controllers that ensure the system remains within a safe
set~\cite{ames2017control,ames2019control}.  Simplified, low-order
models are often used for design.  However, unmodeled dynamics
(e.g. actuator lags, time delays, etc) can lead to safety violations
as shown in Section~\ref{sec:impact}.

This paper presents a method to design CBFs while accounting for
unmodeled dynamics. The approach uses the integral quadratic
constraint (IQC) framework for analysis of uncertain systems
\cite{Megretski1997}.  The main IQC result in \cite{Megretski1997}
provides frequency-domain conditions for stability of uncertain linear
time-invariant (LTI) systems.  Related results have been formulated
using time-domain dissipation inequalities \cite{Veenman13, seiler13}.
The specific IQC formulation used in this paper involves a time-domain
integral with an exponential weighting (see Section~\ref{sec:IQC}).
This is called an $\alpha$-IQC\footnote{The terminology ``$\rho$-IQC''
  was first used in \cite{Lessard2014} for the discrete-time
  case. Later the term ``$\alpha$-IQC'' was used in \cite{hu16TAC} for
  the continuous-time formulation.}  and was introduced in
\cite{Lessard2014,Boczar2015} for analysis of discrete-time
optimization algorithms.  A continuous-time formulation 
was given in \cite{hu16TAC}. Finally, $\alpha$-IQCs were
used in \cite{schwenkel21} to bound the effect of unmodeled
dynamics in the design of model predictive controllers.

There is a large literature on CBFs with a good overview in
\cite{ames2019control}.  The most closely related work on robust CBFs
is briefly summarized.  Robust control barrier functions have been
developed for guaranteeing safety in the presence of
$\mathcal{L}_\infty$ bounded disturbances
\cite{xu2015robustness,garg2021robust,jankovic2018robust,choi2021robust}
or stochastic disturbances \cite{takano2018application}.  The work in
\cite{nguyen2021robust} and \cite{buch22} considers robust CBFs to
account for variations in the model (changes to the vector fields) and
input (sector-bounded) nonlinearities, respectively.  A distinguishing
feature of our paper is the ability to handle the effect unmodeled
dynamics using $\alpha$-IQCs.  The implication is that the true state
of the plant dynamics is only partially observed, i.e. the state of
the unmodeled dynamics is not measured.  Finally, we note that
\cite{jankovic18acc} provides a method to design CBFs for systems with
known time delays.  Our proposed method can handle unknown (but
bounded) delays although with more conservatism than the approach in
\cite{jankovic18acc}.


\section{Preliminaries}


\subsection{Control Barrier Functions}


This section briefly summarizes the formulation to achieve safety using
control barrier functions
\cite{ames2017control,ames2019control}. Consider the feedback system
with plant $P$, a baseline state feedback controller $k$, and a safety
filter. The plant $P$ is assumed to be given by the
following (known) input-affine dynamics:
\begin{align}
\label{eq:Pnom}
\dot{x}(t) = f(x(t)) + g(x(t)) \, u(t), \,\,\, x(0) = x_0 
\end{align}
where $x(t) \in \R^{n_x}$ is the state,
$u(t) \in\mathcal{U} \subset \R^{n_u}$ is the control input, and
$\mathcal{U}$ defines a set of feasible control inputs.  The functions
$f:\R^{n_x}\to \R^{n_x}$, $g:\R^{n_x} \to \R^{n_x\times n_u}$, and
baseline state-feedback controller $k:\R^{n_x} \to \R^{n_u}$ are all
assumed to be locally Lipschitz continuous.

\begin{figure}[htbp]
\centering
\begin{picture}(210,55)(0,-8)
\thicklines
\put(0,20){\vector(1,0){30}}
\put(30,5){\framebox(30,30){$k(x)$}}
\put(70,25){$u_0$}
\put(60,20){\vector(1,0){30}}
\put(90,5){\framebox(40,30)}
\put(98,21){Safety}
\put(100,10){Filter}
\put(145,25){$u$}
\put(130,20){\vector(1,0){30}}
\put(20,0){\dashbox(120,40)}
\put(65,45){$k_{safe}(x)$}
\put(160,5){\framebox(30,30){$P$}}
\put(190,20){\vector(1,0){30}}
\put(200,25){$x$}
\put(205,20){\line(0,-1){35}}
\put(205,-15){\line(-1,0){205}}
\put(0,-15){\line(0,1){35}}
\end{picture}
\caption{State-feedback with safety filter}
\label{fig:SF}
\end{figure}
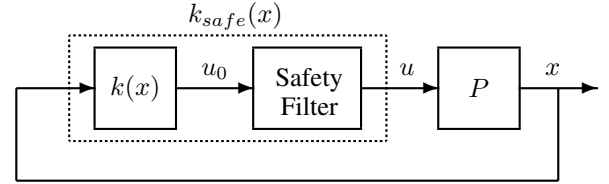
\vspace{-0.08in}


The state-feedback $k(x)$ achieves performance objectives but
is not necessarily safe. Specifically, safety is defined by a safe set
$\Csafe \subset \R^{n_x}$ and the system is in a safe state at time
$t$ if $x(t)\in \Csafe$.  We consider a safe set $\Csafe$
defined with a continuously differentiable
function $h:\R^{n_x}\to \R$:
\begin{align}
  \label{eq:Cdef}
  \Csafe:=\{ x\in \R^{n_x} \, : \, h(x) \ge 0\}
\end{align}
The boundary and interior of the safe set are denoted
$\partial \Csafe$ and $Int(\Csafe)$, respectively.  The closed-loop
dynamics with the baseline state-feedback controller are:
\begin{align}
\dot{x}(t) = f(x(t)) + g(x(t)) \, k(x(t)), \,\,\, x(0) = x_0 
\end{align}
For simplicity, assume this ordinary differential equation is forward
complete, i.e. for every initial condition there exists a unique solution
for all $t\ge 0$.  The closed-loop is said to be \emph{safe} if
$x(0)\in \Csafe$ implies $x(t)\in\Csafe$ for all $t\ge 0$. As noted
above, the closed-loop is not necessarily safe when using the baseline
state-feedback.  Control barrier functions (CBFs) are one method to
design a controller ensuring the closed loop remains in the safe set
$\Csafe$.  In particular, the function $h$ is a \emph{control barrier
function} if there exists $\alpha >0$ such that:
\begin{align}
  \label{eq:CBFconstraint}
  \sup_{u\in \mathcal{U}} \left[ L_f h(x) + L_g h(x) u \right] \ge 
  -\alpha h(x) \,\,\, \forall x\in\R^{n_x}
\end{align}
where $L_f h:= \frac{\partial h}{\partial x} f$ and
$L_g h:= \frac{\partial h}{\partial x} g$ are the Lie derivatives of
$h$ with respect to $f$ and $g$.  If $h$ is a CBF and
$x(t)\in \partial \Csafe$ then there exists $u(t)\in\mathcal{U}$
such that $\dot{h}(x(t)) \ge 0$. Thus if the state reaches
the boundary of $\Csafe$ then the control can prevent the
state from crossing out of the safe set. This is formalized in
Theorem~\ref{thm:cbf} below.  The CBF constraint
\eqref{eq:CBFconstraint} ensures that the following set of control
inputs is non-empty for all $x\in \R^{n_x}$:
\begin{align*}
  \mathcal{U}_{cbf}(x) := \{ u\in\mathcal{U} \, : \, 
  \left[ L_f h(x) + L_g h(x) u \right] \ge   -\alpha h(x) \}
\end{align*}
The existence of a control barrier function can be used to design
a controller that yields safety for the closed-loop.



\begin{theorem}\cite{ames2017control,ames2019control}
  \label{thm:cbf}
  Consider the nominal plant dynamics in \eqref{eq:Pnom}.  Let
  $\Csafe\subset \R^{n_x}$ be the superlevel set of a continuously
  differentiable function $h:\R^{n_x}\to \R$ as defined
  in~\eqref{eq:Cdef}.  Assume $h$ satisfies \eqref{eq:CBFconstraint}
  for some $\alpha>0$. Then any controller
  $k_{safe}:\R^{n_x}\to\R^{n_u}$ with
  $k_{safe}(x) \in \mathcal{U}_{cbf}(x)$ $\forall x \in \R^{n_x}$
  renders the set $\Csafe$ forward invariant.
\end{theorem}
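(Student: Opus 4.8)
The plan is to reduce forward invariance of $\Csafe$ to a scalar differential inequality for $h$ along closed-loop trajectories and then close the argument with a comparison (Gr\"onwall-type) estimate. First I would address well-posedness of the closed loop $\dot x = f(x) + g(x)\,k_{safe}(x)$. Since the pointwise condition $k_{safe}(x)\in\mathcal{U}_{cbf}(x)$ alone does not force $k_{safe}$ to be continuous, I would assume (as is standard for this result) that $k_{safe}$ is locally Lipschitz. Together with the assumed local Lipschitz continuity of $f$ and $g$, this makes the closed-loop vector field locally Lipschitz, so by Picard--Lindel\"of each $x_0\in\Csafe$ yields a unique solution $x(\cdot)$ on a maximal interval $[0,T_{\max})$, and this solution is $C^1$.

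Next I would differentiate $h$ along the solution. Since $h$ is continuously differentiable and $x(\cdot)$ is $C^1$, the scalar $v(t):=h(x(t))$ is differentiable and the chain rule gives
\[
  \dot v(t) = \frac{\partial h}{\partial x}(x(t))\big(f(x(t)) + g(x(t))\,k_{safe}(x(t))\big) = L_f h(x(t)) + L_g h(x(t))\,k_{safe}(x(t)).
\]
Because $k_{safe}(x(t))\in\mathcal{U}_{cbf}(x(t))$ for all $t$, the defining inequality of $\mathcal{U}_{cbf}$ yields the differential inequality $\dot v(t) \ge -\alpha\,v(t)$ on $[0,T_{\max})$.

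The core step is the comparison argument. I would introduce $\phi(t):=e^{\alpha t} v(t)$ and compute $\dot\phi(t) = e^{\alpha t}\big(\dot v(t) + \alpha v(t)\big) \ge 0$, so $\phi$ is nondecreasing. Hence $\phi(t)\ge\phi(0)=v(0)=h(x_0)\ge 0$, which gives $v(t)=e^{-\alpha t}\phi(t)\ge 0$, i.e. $h(x(t))\ge 0$ and $x(t)\in\Csafe$ for every $t\in[0,T_{\max})$. This is exactly forward invariance.

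The main obstacle is the well-posedness point rather than the inequality manipulation: a controller satisfying only the pointwise membership $k_{safe}(x)\in\mathcal{U}_{cbf}(x)$ need not be continuous, so existence and uniqueness of closed-loop solutions is not automatic. The standard resolution, which I would adopt, is to require $k_{safe}$ to be locally Lipschitz; the minimum-norm safety filter (the quadratic program selecting $k_{safe}$) is known to inherit this regularity under mild conditions. A secondary, minor point is that the estimate above only guarantees $x(t)\in\Csafe$ on the interval of existence $[0,T_{\max})$; confinement to $\Csafe$ itself helps preclude finite-time escape, but establishing $T_{\max}=\infty$ would require the forward-completeness type assumption already invoked in the preliminaries.
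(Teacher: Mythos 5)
Your proof is correct and takes essentially the same route as the paper: the paper likewise derives $\dot{h}(x(t)) \ge -\alpha h(x(t))$ along closed-loop trajectories and invokes the Gr\"{o}nwall--Bellman lemma (your monotonicity argument for $e^{\alpha t}h(x(t))$ is precisely that lemma) to conclude $h(x(t)) \ge h(x_0)e^{-\alpha t} \ge 0$ for as long as solutions exist. The one divergence is your treatment of well-posedness: the paper explicitly remarks that $k_{safe}$ is \emph{not} required to be Lipschitz continuous, since the Gr\"{o}nwall argument only needs some solution along which the inequality holds rather than existence and uniqueness, so your added local-Lipschitz hypothesis is harmless but unnecessary, and the paper, like you, guarantees safety only on the interval of existence.
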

\begin{proof}
  This is a special case of Proposition 1 and Corollary 2 in
  \cite{ames2017control}.  The trajectories $x(t)$ of the closed-loop
  with plant \eqref{eq:Pnom} and controller $k_{safe}(x)$ satisfy
  $\dot{h}(x(t)) \ge -\alpha h(x(t))$. It follows from the
  Gr\"{o}nwall-Bellman lemma~\cite{khalil01} that
  $h(t)\ge h(0) e^{-\alpha t}$ for as long as the solutions exist.
  Thus $h(0)\ge 0$ implies $h(t)\ge 0$ and the set $\Csafe$ is forward
  invariant.
\end{proof}




This summary has simplified some technical details. For example, the
CBF condition \eqref{eq:CBFconstraint} has the term $-\alpha h(x)$
where $\alpha \in \R$ is a constant.  The more general formulation in
\cite{ames2017control,ames2019control} uses $-\alpha(h(x))$ where
$\alpha$ is an extended class-$\mathcal{K}$ function. The simplifying
assumptions here allow for a proof using the Gr\"{o}nwall-Bellman
lemma. This proof will be adapted later for the case with unmodeled
dynamics. The more general results in
\cite{ames2017control,ames2019control} follow from Nagumo's theorem
\cite{blanchini15}. 


Theorem~\ref{thm:cbf} provides flexibility in the choice of the
``safe'' controller $k_{safe}$. It is useful to design a safe controller
that: (i) ensures the closed loop remains in 
$\Csafe$, and (ii) minimally alters the control command from the
baseline state-feedback. This is achieved by solving an optimization
in real-time:
\begin{align}
\label{eq:ksafe}
\begin{split}
  k_{safe}(x):= & \arg \min_{u\in \mathcal{U}} \frac{1}{2} \| u - k(x) \|^2 \\
  & \mbox{s.t. } L_f h(x) + L_g h(x) u  \ge   -\alpha h(x) 
\end{split}
\end{align}
If $\mathcal{U} = \R^{n_u}$ then \eqref{eq:ksafe} has a quadratic cost
with one linear constraint.  There is an explicit solution for this
special case.  If $\mathcal{U}$ is a polytope then \eqref{eq:ksafe} is
a quadratic program and can be efficiently solved. Finally, note that
$k_{safe}$ is not necessarily Lipschitz continuous (and the proof of
Theorem~\ref{thm:cbf} using Gr\"{o}nwall-Bellman does not
require Lipschitz continuity.)

\subsection{Impact of Unmodeled Dynamics}
\label{sec:impact}

This section presents a simple example to illustrate the impact of
unmodeled dynamics. Consider a two-dimensional point mass with
position $p\in \R^2$ and velocity $\dot{p} \in \R^2$.  A
double-integrator model for the planar motion is given by:
\begin{align}
  \dot{x}(t) = \bmtx 0 & I \\ 0 & 0 \emtx x(t) + \bmtx 0 \\ I \emtx \, u(t)
\end{align}
where $x(t)=\bsmtx p(t) \\ \dot{p}(t) \esmtx \in\R^4$ is the state
and  $u(t) \in \R^2$ contains the forces.  A baseline state-feedback
controller is designed using linear quadratic regulator with
cost matrices $Q:=diag(1,1,1.75,1.75)$ and $R:=I_2$.  This was implemented
to track a position reference command $r(t)\in \R^2$:
\begin{align*}
  u_0 = K \cdot \left(\bmtx r \\ 0 \emtx - \bmtx p \\ \dot{p} \emtx \right)
  \mbox{ where } K:=\bmtx 1 & 0 & 1.94 & 0 \\ 0 & 1 & 0 & 1.94 \emtx
\end{align*}
This baseline corresponds to independent proportional-derivative
controllers along each dimension. This differs slightly from the
feedback diagram in Figure~\ref{fig:SF} due to the inclusion of the
reference command, i.e. the baseline controller is of the form $u_0 = k(x,r)$.

A stationary obstacle of radius $\bar r=1.5$ is assumed to be at the
position $\bar c=\bsmtx 2 \\ -0.2 \esmtx$. The safe set $\Csafe$ is
defined by Equation~\ref{eq:Cdef} with
$h(x):= (p-\bar c)^T (p-\bar c) - \bar{r}^2 \ge 0$.  The time
derivatives of $h$ along a state trajectory $x$ are given by:
\begin{align}
  \label{eq:hdDI}
  \dot{h}( x(t) ) & = 2 (p(t)-\bar c)^T \dot{p}(t) \\
  \label{eq:hddDI}
  \ddot{h}( x(t), u(t) ) & = 2 (p(t)-\bar c)^T u(t) + 2 \dot{p}(t)^T \dot{p}(t) 
\end{align}
The function $h$ is not a CBF as defined in the previous section as
the control input appears in the second time derivative, i.e.  it has
relative degree 2. 

Exponential CBFs \cite{nguyen2016exponential,ames2019control} can be
used to design safe controllers for barrier functions with relative
degree greater than 1. The basic idea can be summarized as follows.
Safety is ensured if we can design a controller that achieves
$\dot{h}(t) \ge -\alpha h(t)$. Specifically,
$\dot{h}(t)\ge -\alpha h(t)$ and $h(0)\ge 0$ implies, under
appropriate technical conditions, that $h(t)\ge 0$ for as long as the
solution exists.  However, the control input $u$ does not appear in
$\dot{h}(t)$ in \eqref{eq:hdDI}.  Instead, define a new function
$\tilde{h}:= \dot{h} + \alpha h$ and note that the desired condition
is equivalent to $\tilde{h}(t) \ge 0$.  Moreover,
$\dot{\tilde{h}} = \ddot{h} + \alpha \dot{h}$. Hence the control input
appears in $\dot{\tilde{h}}$ due to \eqref{eq:hddDI}, i.e. $\tilde{h}$
is relative degree 1. Thus safety is ensured, under appropriate
technical conditions, if:
\begin{enumerate}[(i)]
\item $h(0)\ge 0$
\item $\tilde{h}(0) \ge 0$ $\Leftrightarrow$ $\dot{h}(0) \ge -\alpha h(0)$
\item $u$ is chosen so that $\dot{\tilde{h}}(t) \ge -\alpha \tilde{h}(t)$
$\Leftrightarrow$ $u$ is chosen so that 
$\ddot{h}(t) \ge  - \alpha^2 h(t) -2\alpha\dot{h}(t)$
\end{enumerate}
Roughly, conditions (ii) and (iii) ensure that $\tilde{h}(t) \ge 0$
which, combined with condition (i), ensures $h(t) \ge 0$.  The 
safe controller from the exponential CBF is obtained by solving the
following optimization in real-time:
\begin{align}
\label{eq:ksafeECBF}
\begin{split}
  k_{safe}(x,r):= & \arg \min_{u\in \mathcal{U}} \frac{1}{2} \| u - k(x,r) \|^2 \\
  & \mbox{s.t. } \ddot{h}(x,u) \ge -\alpha^2 h(x) - 2 \alpha \dot{h}(x)
\end{split}
\end{align}
Here $\dot{h}(x)$ and $\ddot{h}(x,u)$ denote the expressions in
\eqref{eq:hdDI} and \eqref{eq:hddDI}. Additional details on
exponential CBFs, including a more
rigorous derivation, can be found in
\cite{nguyen2016exponential,ames2019control}.

Figure~\ref{fig:ECBFPosition} shows a simulation of the
two-dimensional point mass with the exponential CBF controller for
$\alpha = 5$. The unsafe region due to the obstacle is shaded cyan.
The initial conditions are $p(0) = [-10,0]^T$ and
$\dot{p}(0)=[0,0]^T$.  The reference transitions linearly in time from
this initial condition to a final desired position of $[+10,0]^T$ at
time $t=45sec$.  The simulation with the nominal plant model (black
line) follows the reference and avoids the obstacle as expected.  The
figure also shows a simulation (red dashed) with the same controller
but on a plant with an input delay $\tau=0.13sec$.
The zoomed plot on the right of Figure~\ref{fig:ECBFPosition}
shows that the simulation with delay has small safety violations.
Larger delays cause even greater safety violations.

\begin{figure}[h!] 
  \centering
  \includegraphics[width=0.64\linewidth]{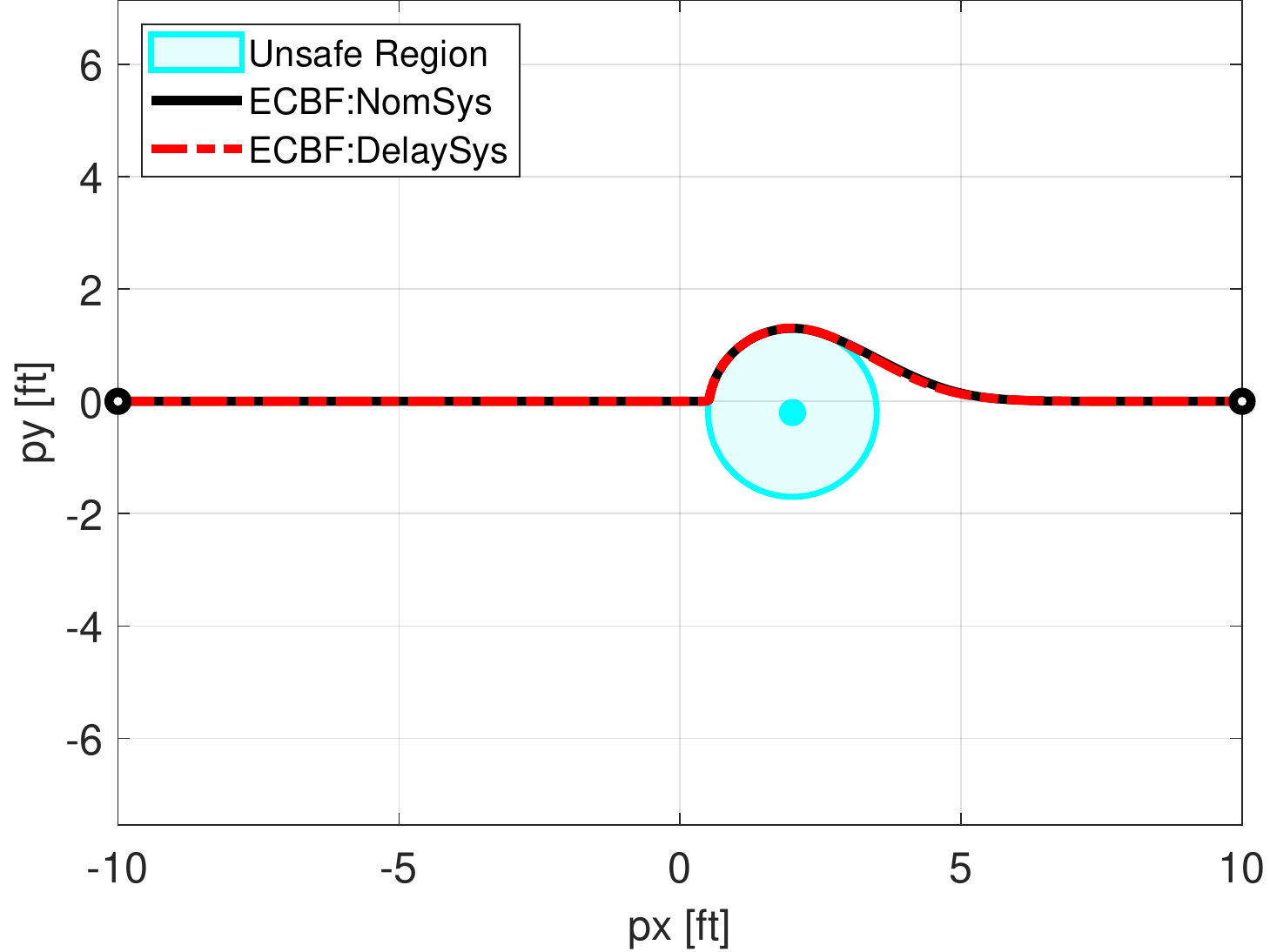}
  \includegraphics[width=0.33\linewidth,trim={1.4in, 0, 1.4in, 0},clip]{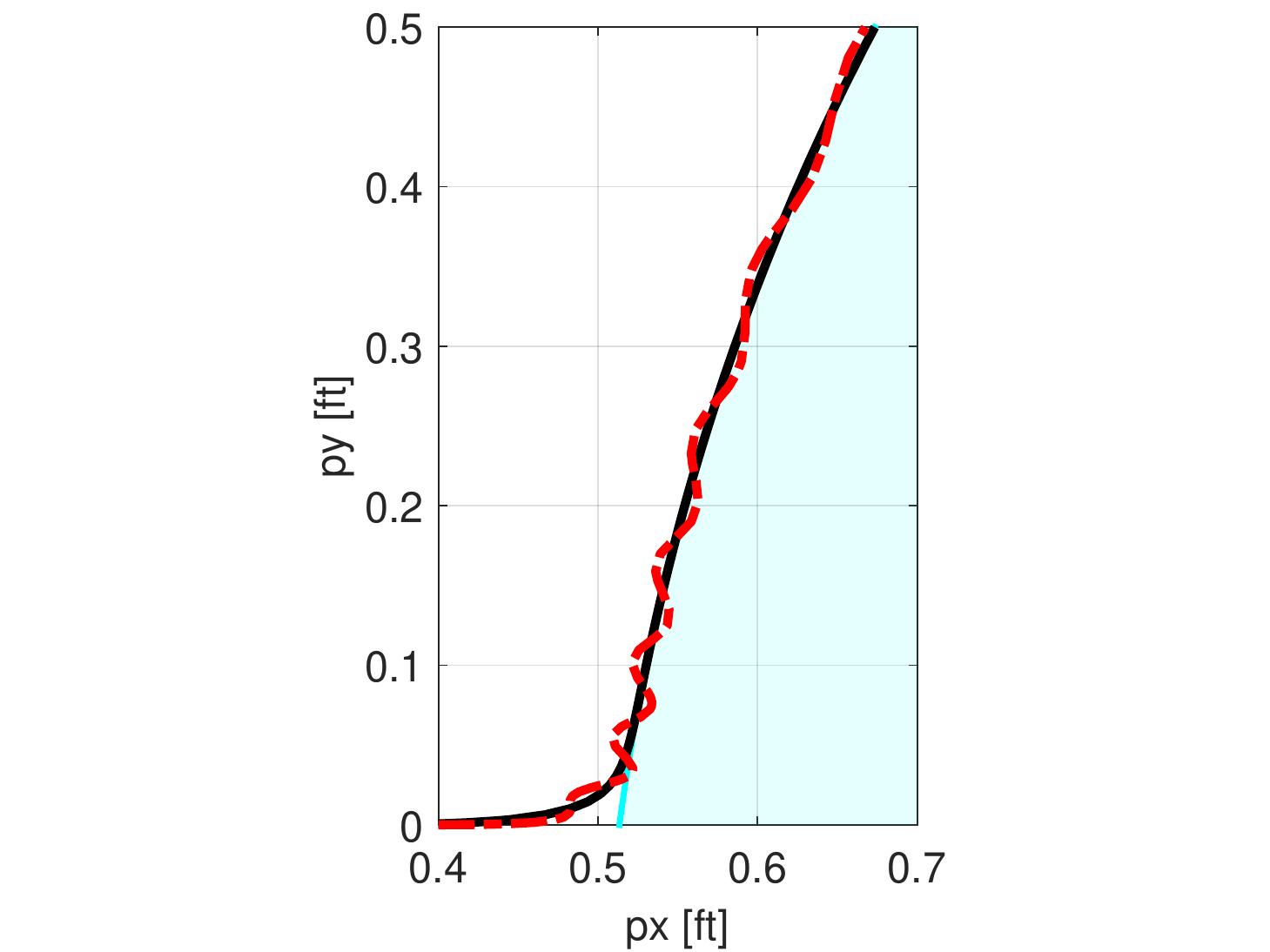}
  \caption{Position for exponential CBF controller on
    nominal point mass (black) and with additional
    delay (red dashed) of $\tau=0.13$sec.  The left plot shows full
    trajectory from $(-10,0)$ to $(10,0)$. The right plot zooms in on
    trajectories near boundary of the unsafe region.}
  \label{fig:ECBFPosition}
\end{figure}

Figure~\ref{fig:ECBFInputs} shows the the control inputs for the two
simulations.  The unmodeled delays cause the inputs to oscillate when
the exponential CBF is activated (i.e.  $k_{safe}(x,r) \ne k(x,r)$)
between $t=25sec$ to $t=32sec$.  Similar issues arise due to
unmodeled, first-order actuator dynamics.

\begin{figure}[h!] 
  \centering
  \includegraphics[width=0.9\linewidth]{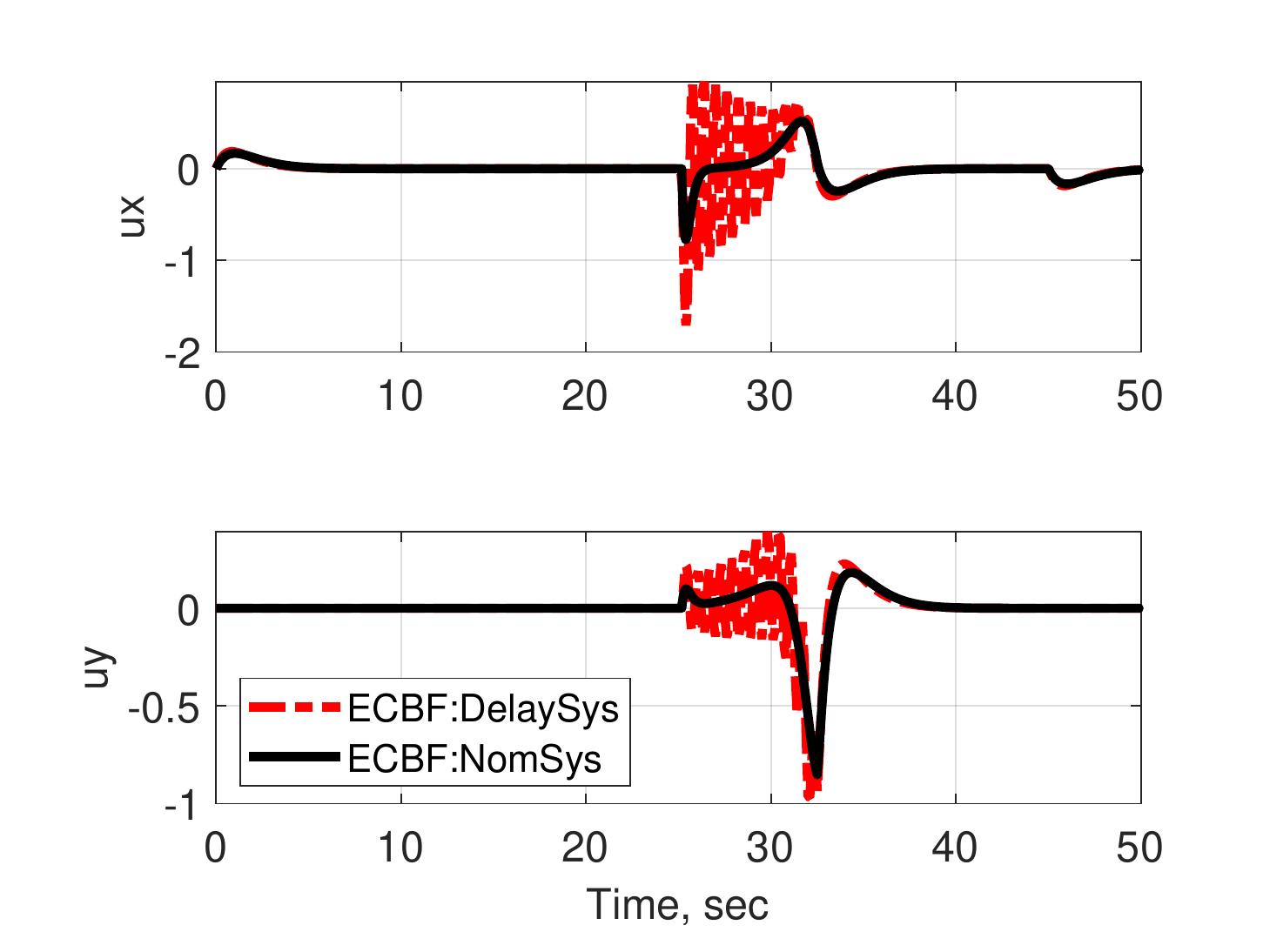}
  \caption{Control inputs for exponential CBF controller on nominal
    point mass (black) and with delay (red dashed)
    of $\tau=0.13$sec.}
  \label{fig:ECBFInputs}
\end{figure}

\section{Problem Formulation: Robust CBFs}

The safety controllers designed using CBFs or exponential CBFs are often
designed using low-order, approximate models.  This can cause issues
as indicated by the example in the previous subsection. A method to
design CBFs for systems with a known delay is given
in~\cite{jankovic18acc}.  The rest of this paper provides a method to
deal with unmodeled (unknown) delays and/or unmodeled dynamics.  In
particular, we focus on the effect of unmodeled dynamics at the plant
input.  The plant with uncertainty at the input is:
\begin{align}
\label{eq:Punc}
\begin{split}
\dot{x}(t) & = f(x(t)) + g(x(t)) \, ( u(t)+w(t) ), \,\,\, x(0) = x_0 \\
w(t) & = \Delta(u)(t)
\end{split}
\end{align}
The uncertainty enters due to the additional input $w=\Delta(u)$. If
$\Delta=0$ then this corresponds to the nominal (known) model in
Equation~\ref{eq:Pnom}.  However, $\Delta$ can have dynamics and account
for deviations from the nominal dynamics due to unmodeled
effects. This is demonstrated through two examples.

\begin{ex}[Delay]
  \label{ex:Delay}
  Assume that the actual plant input is $v=D_\tau(u)$ where $D_\tau$
  denotes a delay of $\tau$ seconds.  Thus $v=D_\tau(u)$ corresponds
  to $v(t)=u(t-\tau)$ for $t\ge \tau$ and $v(t)=0$ otherwise.  The
  effect of a delay at the plant input is modeled in
  Equation~\ref{eq:Punc} by setting $w(t):=u(t-\tau)-u(t)$.  In this
  case the perturbation is $\Delta:=D_\tau -1$.
\end{ex}

\begin{ex}[Actuator Dynamics]
  \label{ex:Actuator}
  Assume that the actual plant input is $V(s)=A(s) U(s)$ where $A(s)$
  denotes the transfer function for neglected actuator dynamics.  The
  effect of the neglected actuator dynamics at the plant input is
  modeled in \eqref{eq:Punc} by setting
  $W(s):= V(s)-U(s) = (A(s)-1)\, U(s)$.  In this case the perturbation
  is $\Delta(s):=A(s)-1$. For example, $A(s)=\frac{p}{s +p}$
  corresponds to a simple first-order model for the actuator dynamics
  yielding $\Delta(s)=\frac{-s}{s+p}$.
\end{ex}

In both examples, the signal $w$ represents the deviation from the
nominal behavior.  Note that $w$ is not simply an exogenous
disturbance as it depends on the control signal through the dynamics
of $\Delta$.  The objective is to design a safe controller that is
robust to these unmodeled dynamics.  To make this precise,
assume the unmodeled dynamics are restricted to be within a known
set $\mathbf{\Delta}$.  This set is described more formally in the
next section. For now it is sufficient to state that $\mathbf{\Delta}$
provides some bounds on the uncertainty. 


The objective is to find a condition on the control input $u$ that
ensures that the system remains safe for any uncertainty in the
uncertainty set $\mathbf{\Delta}$.  Formally, the goal is to design
$u=k_{safe}(x)$ so that the $x(0) \in \Csafe$ implies $x(t)\in\Csafe$
for all $t\ge 0$ and for all $\Delta\in\mathbf{\Delta}$.  We will use
a generalization of CBFs to ensure safety.  First note that the
nominal CBF constraint in Equation~\ref{eq:CBFconstraint} depends only
on the state $x$ and the functions $(f,g,h)$. This is an algebraic
condition that can be enforced at each time instant as part of the
optimization \eqref{eq:ksafe}.  It is important to emphasize that
the uncertainty $w=\Delta(u)$ has dynamics so that $w(t_0)$ depends,
in general, on $u(t)$ for $t\le t_0$. Our notion of robust CBF,
defined in Section~\ref{sec:RobustCBF}, will account for these dynamic
couplings.

\section{Robust CBFs with Unmodeled Dynamics}
\label{sec:RobustCBF}

\subsection{Integral Quadratic Constraints (IQCs)}
\label{sec:IQC}

Our approach relies on IQCs to bound the effect of the unmodeled
dynamics.  We use a time-domain formulation with an exponential
weighting factor. This is based on a discrete-time formulation
introduced in \cite{Lessard2014} for the analysis of optimization
algorithms. A similar formulation has also been used in
\cite{Boczar2015,hu16TAC} to analyze convergence rates and in
\cite{schwenkel21} to design robust model-predictive controllers.  A
special case of a continuous-time $\alpha$-IQC is defined
below.\footnote{Definition~\ref{def:rhoiqc} uses an exponential factor
  $e^{\alpha t}$. Continuous-time $\alpha$-IQCs have been previously
  defined using the factor $e^{2\alpha t}$ \cite{hu16TAC}. Either form
  can be converted to the other by accounting for the additional
  factor of 2.  The version used here with $e^{\alpha t}$ aligns
  closely with their use later for CBFs.}

\begin{defin}
  \label{def:rhoiqc}
  Let $F(s)$ be an $n_u \times n_u$ stable, LTI system.  A causal
  operator
  $\Delta : L_{2e}^{n_u}[0,\infty) \rightarrow L_{2e}^{n_w}[0,\infty)$
  satisfies the \underline{time-domain $\alpha$-IQC} defined by
  $F(s)$ if the following inequality holds for all
  $u \in L_{2e}^{n_u}[0,\infty)$, $w=\Delta(u)$ and $T\ge 0$
  \begin{align}
    \label{eq:rhoiqc}
    \int_0^T e^{\alpha t} \left( z(t)^T z(t) - w(t)^T w(t) \right) \, dt \ge 0 
  \end{align}
  where $z$ is the output of $F(s)$ started from zero initial
  conditions and driven by input $u$.
\end{defin}

Definition~\ref{def:rhoiqc} is a special case of a more general class
of $\alpha$-IQCs.  This special case is used for exposition and more
general $\alpha$-IQCs can be incorporated with CBFs using the method in
Section~\ref{eq:cbswithiqcs}.  The notation $\Delta \in IQC(F,\alpha)$
indicates that $\Delta$ satisfies the $\alpha$-IQC defined by $F(s)$.
The $\alpha$-IQC is a constraint on the input/output pairs of $\Delta$
and $IQC(F,\alpha)$ is the set of uncertainties bounded by the
$\alpha$-IQC. As a special case, if $\Delta$ is SISO, $\alpha=0$, and
$F(s)=1$ then \eqref{eq:rhoiqc} simplifies to
$\int_0^T w^2(t) dt \le \int_0^T u^2(t) dt$.  This represents a
constraint that the output of $\Delta$ has less energy (in the $L_2$
norm) than the input. The dynamics in $F(s)$ can be used to bound the
effect of the uncertainty as a function of frequency.  This is
demonstrated in the next example.

\begin{ex}
  \label{ex:DelayIQC}
  The uncertainty due to a delay $\tau$ is given by $w=\Delta(u)$ with
  $\Delta:=D_\tau -1$ as shown in Example~\ref{ex:Delay}. The
  $\alpha$-IQC is derived using frequency-domain relations. Let
  $U(s)$, $W(s)$, and $Z(s)$ denote the Laplace Transforms of $u(t)$,
  $w(t)$, and $z(t)$, respectively.  Thus 
  $W(s) = \Delta(s) U(s)$ and $Z(s) = F(s) U(s)$ where
  $\Delta(s)=( e^{-s\tau} -1 )$. If $\alpha=0$, we can rewrite the
  time-domain constraint \eqref{eq:rhoiqc} in the frequency domain
  using Parseval's theorem \cite{dullerud00}:
  \begin{align}
    \int_{-\infty}^\infty ( |F(j\omega)|^2 - |\Delta(j\omega)|^2) \cdot 
    |U(j\omega)|^2  \, d\omega \ge 0
  \end{align}
  This condition must hold for all inputs and hence we must select
  $F(s)$ to satisfy $|F(j\omega)| \ge |\Delta(j\omega)|$
  $\forall \omega$.  This is done by: (i) generating the frequency
  response of $\Delta(j\omega)$ for the given $\tau$, and (ii)
  computing a stable, minimum-phase $F(s)$ with
  $|F(j\omega)| \ge |\Delta(j\omega)|$ $\forall \omega$.  Step (ii)
  can be performed via convex optimization, e.g. as done in
  \texttt{fitmagfrd} in Matlab. A similar process can be used if the
  delay is unknown but restricted to $[0,\bar\tau]$ for some given
  $\bar{\tau}$. In this case, $F(s)$ is constructed to bound the
  frequency responses of $\Delta(j\omega)$ generated for many delay
  values $\tau \in [0,\bar\tau]$. This can again be solved by convex
  optimization.

  The more general case $\alpha>0$ is handled as follows. Define
  $\tilde{w}(t):=e^{\frac{\alpha}{2}t} w(t)$ and similarly for
  $\tilde{u}$ and $\tilde{z}$. Multiplication by
  $e^{\frac{\alpha}{2} t}$ in the time domain causes a shift in the
  frequency domain: $\tilde{W}(s)=W(s-\frac{\alpha}{2})$.  In
  addition, define $\tilde{\Delta}(s)=\Delta(s-\frac{\alpha}{2})$ and
  $\tilde{F}(s)=F(s-\frac{\alpha}{2})$.  Thus the shifted signals
  satisfy $\tilde{W}(s) = \tilde{\Delta}(s) \tilde{U}(s)$ and
  $\tilde{Z}(s) = \tilde{F}(s) \tilde{U}(s)$.  The shifted filter
  $\tilde{F}(s)$ can be constructed to bound the frequency response of
  $\tilde{\Delta}(s)$ as described above.  The filter
  for the $\alpha$-IQC is obtained by shifting back:
  $F(s)=\tilde{F}(s+\frac{\alpha}{2})$. These steps ensure that $F(s)$
  defines a valid $\alpha$-IQC for the delay.

\end{ex}




\subsection{CBFs with IQCs}
\label{eq:cbswithiqcs}

The effect of the uncertainty $\Delta$ can be incorporated into the
CBF condition using the $\alpha$-IQC and a Lagrange multiplier. To
elaborate on this point, assume the filter $F(s)$ has the following
state-space representation:
\begin{align}
  \begin{split}
  \dot{x}_F(t) & = A_F\, x_F(t) + B_F\, u(t),
  \,\, x_F(0)=0 \\
   z(t) & = C_F \, x_F(t) + D_F \, u(t)
 \end{split}
\end{align}
where $x_F(t) \in \R^{n_F}$ is the state of $F(s)$. The integrand in
\eqref{eq:rhoiqc} is $e^{\alpha t} I(x_F(t),u(t),w(t))$ where:
\begin{align*}
  I(x_F,u,w):= (C_F x_F + D_F u)^T (C_F x_F + D_F u ) - w^Tw
\end{align*}
The function $h$ is a \emph{robust CBF} for 
$\Delta \in IQC(F,\alpha)$ if there exists a Lagrange
multiplier $\lambda>0$ such that:
\begin{align}
  \nonumber
  & \sup_{u\in \mathcal{U}} \left[ L_f h(x) + L_g h(x) (u+w) 
    -\lambda \, I(x_F,u,w)
    \right] \ge 
    -\alpha h(x)\\
  \label{eq:RCBFconstraint}
  & \forall x\in\R^{n_x}, \, \forall x_F\in \R^{n_F},
    \, \forall w\in \R^{n_w}
\end{align}
If $h$ is a robust CBF then there exists $u(t)\in \mathcal{U}$ such
that $\dot{h}-\lambda I \ge -\alpha h$.  The following technical lemma
verifies that this is sufficient to ensure safety, i.e. $h(0)\ge 0$
implies $h(t)\ge 0$ for all time.  The lemma is stated for functions
of time and is a variation of the Gr\"{o}nwall-Bellman lemma
\cite{khalil01}.
\begin{lemma}
  \label{lem:BGversion}
  Assume $h:\R_{\geq 0} \to\R$ is continuously differentiable and
  $I:\R_{\ge 0}\to\R$ is Lebesgue integrable. In addition, assume the
  following two conditions hold for some $\alpha$, $\lambda>0$:
\begin{enumerate}
\item[(a)] $\dot{h}(t) - \lambda I(t) \ge - \alpha h(t)$ for all $t\ge 0$
\item[(b)] $\int_0^T e^{\alpha t} I(t) dt \ge 0$ for all $T \ge 0$
\end{enumerate}
Then $h(t) \ge h(0)e^{-\alpha t}$ for all $t\ge 0$.
\end{lemma}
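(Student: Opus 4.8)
The plan is to use the standard integrating-factor technique that underlies the Gr\"{o}nwall-Bellman lemma, adapted to carry the extra $\lambda I$ term through the estimate. The key observation is that condition (a) can be rearranged as $\dot{h}(t) + \alpha h(t) \ge \lambda I(t)$, and that after multiplying by the strictly positive weight $e^{\alpha t}$ the left-hand side becomes exactly the derivative of $e^{\alpha t} h(t)$.

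First I would multiply condition (a) through by $e^{\alpha t} > 0$ to obtain the pointwise inequality $\frac{d}{dt}\left[ e^{\alpha t} h(t) \right] = e^{\alpha t}\dot{h}(t) + \alpha e^{\alpha t} h(t) \ge \lambda e^{\alpha t} I(t)$, valid for all $t \ge 0$. Since $h$ is continuously differentiable, the map $t \mapsto e^{\alpha t} h(t)$ is $C^1$ and its derivative is continuous; on the right, $e^{\alpha t} I(t)$ is Lebesgue integrable on each finite interval because $I$ is integrable and $e^{\alpha t}$ is continuous and bounded on $[0,T]$.

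Next I would integrate this inequality over $[0,T]$. The left-hand integral evaluates by the fundamental theorem of calculus to $e^{\alpha T} h(T) - h(0)$, while the right-hand integral is $\lambda \int_0^T e^{\alpha t} I(t)\, dt$. Invoking condition (b) together with $\lambda > 0$ makes this right-hand integral nonnegative, so $e^{\alpha T} h(T) - h(0) \ge 0$. Dividing by the positive factor $e^{\alpha T}$ then yields $h(T) \ge h(0) e^{-\alpha T}$ for every $T \ge 0$, which is the claim.

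The proof is largely mechanical once the integrating factor is identified; the only point requiring care is the measure-theoretic bookkeeping in the integration step. Specifically, I would confirm that integrating the pointwise inequality is legitimate: the left side is continuous, so the fundamental theorem of calculus applies exactly, while the right side is merely Lebesgue integrable, and the inequality between two integrable functions is preserved under integration over $[0,T]$. I expect this to be the main (though modest) obstacle, since the asymmetry in regularity between the $C^1$ function $h$ and the only-integrable function $I$ is precisely why the hypotheses are split the way they are.
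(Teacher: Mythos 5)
Your proof is correct and follows essentially the same route as the paper: the integrating factor $e^{\alpha t}$ turns condition (a) into $\frac{d}{dt}\left[ e^{\alpha t} h(t) \right] \ge \lambda e^{\alpha t} I(t)$, which is integrated over $[0,T]$ and combined with condition (b) and $\lambda > 0$ to give $e^{\alpha T} h(T) - h(0) \ge 0$. The extra measure-theoretic remarks you add are sound but not needed beyond what the paper's two-line argument already implicitly uses.
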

\begin{proof}
 First, use assumption (a) to show the following:
\begin{align}
  \frac{d}{dt} \left( h(t) e^{\alpha t} \right)
  = \left( \dot{h}(t) + \alpha h(t) \right) e^{\alpha t}
  \stackrel{(a)}{\ge} \lambda e^{\alpha t} I(t)
\end{align}
Integrate this inequality from $t=0$ to $t=T$ and apply
(b):
\begin{align}
h(T) e^{\alpha T} - h(0) \ge 
\lambda \int_0^T e^{\alpha t} I(t)  \, dt \stackrel{(b)}{\ge} 0
\end{align}
This yields $h(T) \ge h(0)e^{-\alpha T}$ for all $T\ge 0$.
\end{proof}

The robust CBF constraint \eqref{eq:RCBFconstraint} ensures that the
following set is non-empty for all $x\in \R^{n_x}$ and
$x_F \in \R^{n_F}$:
\begin{align*}
 \mathcal{U}_{rcbf}(x,x_F) :=&  \{ u\in\mathcal{U} \, : \,  
L_f h(x)  + L_g h(x) (u+w)  \\
& -\lambda I(x_F,u,w)  \ge   -\alpha h(x)
  \,\, \forall w \in \R^{n_w}  \}
\end{align*}

It is emphasized that there is no a-priori bound on $w(t)$ at any
point in time.  Instead, the $\alpha$-IQC provides a bound on the
energy ($L_2$-norm) of $w$. Thus the robust CBF condition in the
definition of $\mathcal{U}_{rcbf}$ holds for all possible values of
$w$.  The next theorem states that the existence of a robust control
barrier function can be used to design a controller that yields safety
for all possible uncertainties in $IQC(F,\alpha)$.

\begin{theorem}
  \label{thm:rcbf}
  Consider the uncertain plant dynamics in \eqref{eq:Punc} with
  $\Delta \in IQC(F,\alpha)$ for some stable, LTI system $F$.  Let
  $\Csafe\subset \R^{n_x}$ be the superlevel set of a continuously
  differentiable function $h:\R^{n_x}\to \R$ as defined
  in~\eqref{eq:Cdef}.  Assume $h$ satisfies \eqref{eq:RCBFconstraint}
  for some $\alpha$, $\lambda>0$. Then any Lipschitz continuous
  controller $k_{safe}:\R^{n_x}\times \R^{n_F}\to\R^{n_u}$ with
  $k_{safe}(x,x_F) \in \mathcal{U}_{rcbf}(x,x_F)$
  $\forall (x,x_F) \in \R^{n_x}\times \R^{n_F}$ renders the set
  $\Csafe$ forward invariant for all $\Delta \in IQC(F,\alpha)$.
\end{theorem}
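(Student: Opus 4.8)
The plan is to mirror the proof of Theorem~\ref{thm:cbf}, replacing the classical Gr\"{o}nwall--Bellman step with Lemma~\ref{lem:BGversion}. First I would form the closed-loop interconnection of the uncertain plant \eqref{eq:Punc}, the filter $F(s)$ with state $x_F$ initialized at $x_F(0)=0$, and the controller $u(t)=k_{safe}(x(t),x_F(t))$. Because $k_{safe}$ is Lipschitz continuous, $f,g$ are locally Lipschitz, and $\Delta$ is a causal $L_{2e}\to L_{2e}$ operator, this interconnection has a well-defined solution $(x,x_F)$ on some maximal interval; as in Theorem~\ref{thm:cbf}, the invariance conclusion will hold for as long as the solution exists. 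Along this solution I would set $z(t)=C_F x_F(t)+D_F u(t)$, $w(t)=\Delta(u)(t)$, and define the scalar signals $I(t):=I(x_F(t),u(t),w(t))=z(t)^Tz(t)-w(t)^Tw(t)$ and $h(t):=h(x(t))$.

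The first verification step is hypothesis (a) of Lemma~\ref{lem:BGversion}. Differentiating $h$ along the plant dynamics gives $\dot h(t)=L_f h(x(t))+L_g h(x(t))(u(t)+w(t))$. Since $u(t)\in\mathcal{U}_{rcbf}(x(t),x_F(t))$ by construction, the defining inequality of $\mathcal{U}_{rcbf}$ holds for \emph{every} $w\in\R^{n_w}$, hence in particular for the actual realized value $w(t)$. Substituting $w(t)$ into that inequality yields exactly $\dot h(t)-\lambda I(t)\ge -\alpha h(t)$, which is hypothesis (a). The key point is that the robust CBF constraint is enforced pointwise against all admissible $w$, so no a~priori bound on the instantaneous $w(t)$ is needed: the worst-case quantifier in \eqref{eq:RCBFconstraint}, together with the $-\lambda I$ penalty, absorbs the unknown disturbance in the spirit of an S-procedure relaxation.

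Next I would check hypothesis (b). Because $x_F(0)=0$ and $\dot x_F=A_F x_F+B_F u$, the signal $z$ is precisely the output of $F(s)$ driven by $u$ from zero initial conditions, so Definition~\ref{def:rhoiqc} applies verbatim: $\int_0^T e^{\alpha t} I(t)\,dt=\int_0^T e^{\alpha t}(z^Tz-w^Tw)\,dt\ge 0$ for all $T\ge 0$, which is hypothesis (b). With both hypotheses established, Lemma~\ref{lem:BGversion} gives $h(t)\ge h(0)e^{-\alpha t}$; since $x(0)\in\Csafe$ means $h(0)\ge 0$, it follows that $h(t)\ge 0$ and hence $x(t)\in\Csafe$ for all $t$ in the interval of existence, for every $\Delta\in IQC(F,\alpha)$.

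The main obstacle I anticipate is regularity/integrability, which is exactly what Lemma~\ref{lem:BGversion} is designed to absorb. Since $\Delta$ only maps into $L_{2e}$, the realized $w$ need not be continuous, so $\dot x$ is merely locally integrable and $h(t)$ is absolutely continuous rather than $C^1$; the differentiation in step~(a) then holds only almost everywhere, and I must confirm that $w^Tw$ (hence $I$) is locally Lebesgue integrable. This holds because $w\in L_{2e}$, while $u=k_{safe}(x,x_F)$ is continuous (a Lipschitz function of continuous signals), making $z$ continuous. This is precisely the setting of Lemma~\ref{lem:BGversion}, whose integration argument invokes the fundamental theorem of calculus for absolutely continuous functions and therefore tolerates the a.e. differentiation. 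A secondary technical point is well-posedness of the $\Delta$-loop guaranteeing the solution on which the argument runs; as in Theorem~\ref{thm:cbf}, I would either invoke a standing well-posedness assumption or restrict the claim to the maximal interval of existence.
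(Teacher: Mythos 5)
Your proposal is correct and takes essentially the same route as the paper: the paper's own proof of Theorem~\ref{thm:rcbf} is a two-line argument asserting that the closed-loop trajectories satisfy conditions (a) and (b) of Lemma~\ref{lem:BGversion} and then invoking that lemma, which is exactly the verification you spell out --- (a) from $u(t)\in\mathcal{U}_{rcbf}(x(t),x_F(t))$ evaluated at the realized $w(t)$ (the ``for all $w\in\R^{n_w}$'' quantifier in the constraint), and (b) from Definition~\ref{def:rhoiqc} since $x_F(0)=0$ makes $z$ the output of $F$ driven by $u$ from zero initial conditions. Your added remarks on regularity ($w\in L_{2e}$, absolute continuity, a.e.\ differentiation) and well-posedness of the $\Delta$-loop are careful elaborations of technicalities the paper leaves implicit (it hedges with ``for as long as the solutions exist''), not a different approach.
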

\begin{proof}
  The closed-loop with plant \eqref{eq:Punc}, controller
  $k_{safe}(x,x_F)$, and any $\Delta \in IQC(F,\alpha)$ has
  trajectories that satisfy the conditions (a) and (b) of
  Lemma~\ref{lem:BGversion}.  It follows from this Lemma that
  $h(0)\ge 0$ implies $h(t)\ge 0$ for as long as the solutions
  exist.
\end{proof}

The next optimization attempts to match a baseline controller $k(x)$
while satisfying the condition in Theorem~\ref{thm:rcbf}:
{\small
\begin{align*}
  & k_{safe}(x,x_F):= \arg \min_{u\in \mathcal{U}} \frac{1}{2} \| u - k(x) \|^2 
  \,\,\, \mbox{subject to:}\\
  & L_f h(x) + L_g h(x) (u+w) -\lambda I(x_F,u,w) \ge   -\alpha h(x) 
 \, \forall w \in \R^{n_w}
\end{align*}} 
The constraint is quadratic in $w$. The worst-case value of $w$ is
obtained by minimizing the left side to obtain:
\begin{align}
  \label{eq:wstar}
  w^*:= -\frac{1}{2\lambda} (L_gh(x))^T
\end{align}
The optimization can be equivalently re-written using $w^*$:
\begin{align}
\label{eq:ksafeRCBF}
  & k_{safe}(x,x_F):= \arg \min_{u\in \mathcal{U}} \frac{1}{2} \| u - k(x) \|^2 
  \,\,\, \mbox{subject to:}\\
\nonumber
  & L_f h(x) + L_g h(x) (u+w^*) -\lambda I(x_F,u,w^*) \ge   -\alpha h(x) 
\end{align}
The real-time implementation requires a measurement of the state $x$.
This can be used to form $w^*$. In addition, the filter $F(s)$ must be
simulated with input $u$ from initial condition $x_F(0)=0$ to obtain
$x_F(t)$.  Given $(x,x_F)$, the
optimization \eqref{eq:ksafeRCBF} has a convex quadratic constraint on
$u$ and a quadratic cost.  This is a convex optimization (assuming
$u\in \mathcal{U}$ is a convex constraint) and can be efficiently
solved in real-time.

Consider the special case with the following assumptions: (i) the
filter is constant with no states, i.e. $F(s)=D_F$, (ii)
$\lambda \to \infty$, and (iii) $\lambda D_F^T D_F \to 0$.  It follows
from \eqref{eq:wstar} and (ii) that $w^*$ and $-\lambda (w^*)^T w^*$
tend to zero. Moreover, (iii) implies that
$-\lambda (D_F u)^T (D_F u)$ tends to zero.  Thus the robust CBF
condition in \eqref{eq:ksafeRCBF} converges, under these assumptions,
to the nominal CBF condition in \eqref{eq:ksafe}. In other words, we
approximately recover the nominal CBF condition by choosing a small
(constant) uncertainty level for $F$ and a large value for the
Lagrange multiplier $\lambda>0$.  This provides one pragmatic approach
to handle unmodeled dynamics with CBFs: Simply use a large Lagrange
multiplier and a small constant $F$ to (heuristically) provide some
robustness to unmodeled dynamics.  A more formal approach is to bound
the unmodeled dynamics using $F(s)$ as done in
Example~\ref{ex:DelayIQC}.

Note that the optimization \eqref{eq:ksafeRCBF} is not necessarily
feasible even if $\mathcal{U} = \R^{n_u}$ due to the quadratic term
$-\lambda (C_Fx_F+D_F u)^T (C_Fx_F+D_F u)$. This is difficult to
analyze precisely as past values of $u$ impact the state $x_F$ of the
filter $F$. Smaller values of $\lambda$ tend to improve feasibility
but lead to more conservative paths around the unsafe set. Conversely,
larger values of $\lambda$ tend to degrade feasibility but more
closely approximate the performance of the nominal CBF controller.





\section{Example}
\label{sec:example}

We will again consider the two-dimensional point mass dynamics
introduced in Section~\ref{sec:impact}.  Recall that we designed an
exponential CBF with $\alpha = 5$ and explored the effect of an
unmodeled delay of $\tau = 0.13$sec.  In this section we will use a
adapt the results in Section~\ref{sec:RobustCBF} to derive a robust
exponential CBF for the two-dimensional point mass.

The first step is to derive a frequency domain bound on the
perturbation due to the unmodeled delay. We assume the true delay
$\tau$ is unknown but restricted to $[0,\bar\tau]$ with
$\bar\tau=0.13$.  The corresponding perturbation
$\Delta(s)=( e^{-s\tau} -1 )$ is bounded using the process described
in Example~\ref{ex:DelayIQC} in
Section~\ref{sec:IQC}. Figure~\ref{fig:IQCDelayBounds} shows frequency
responses (red-dashed) for
$\tilde{\Delta}(s)=\Delta(s-\frac{\alpha}{2})$ with ten values of
delay evenly spaced between $[0.1\bar\tau,\bar\tau]$.  The first-order
system $\tilde{F}(s) := \frac{2.84 s + 5.81}{s+14.48}$ satisfies
$|\tilde{F}(j\omega)| \ge |\tilde{\Delta}(j\omega)|$ $\forall \omega$
and for each delay sample. This choice of $\tilde{F}(s)$ was computed
using \texttt{fitmagfrd} in Matlab. Next, the $\alpha$-IQC filter is
obtained by shifting the frequency:
$F(s) = \tilde{F}(s+\frac{\alpha}{2})$.  The state-space data for the
resulting filter is $(A_F,B_F,C_F,D_F) = (-16.98,6.20,-5.70,2.84)$.


\begin{figure}[h!] 
  \centering
  \includegraphics[width=0.8\linewidth]{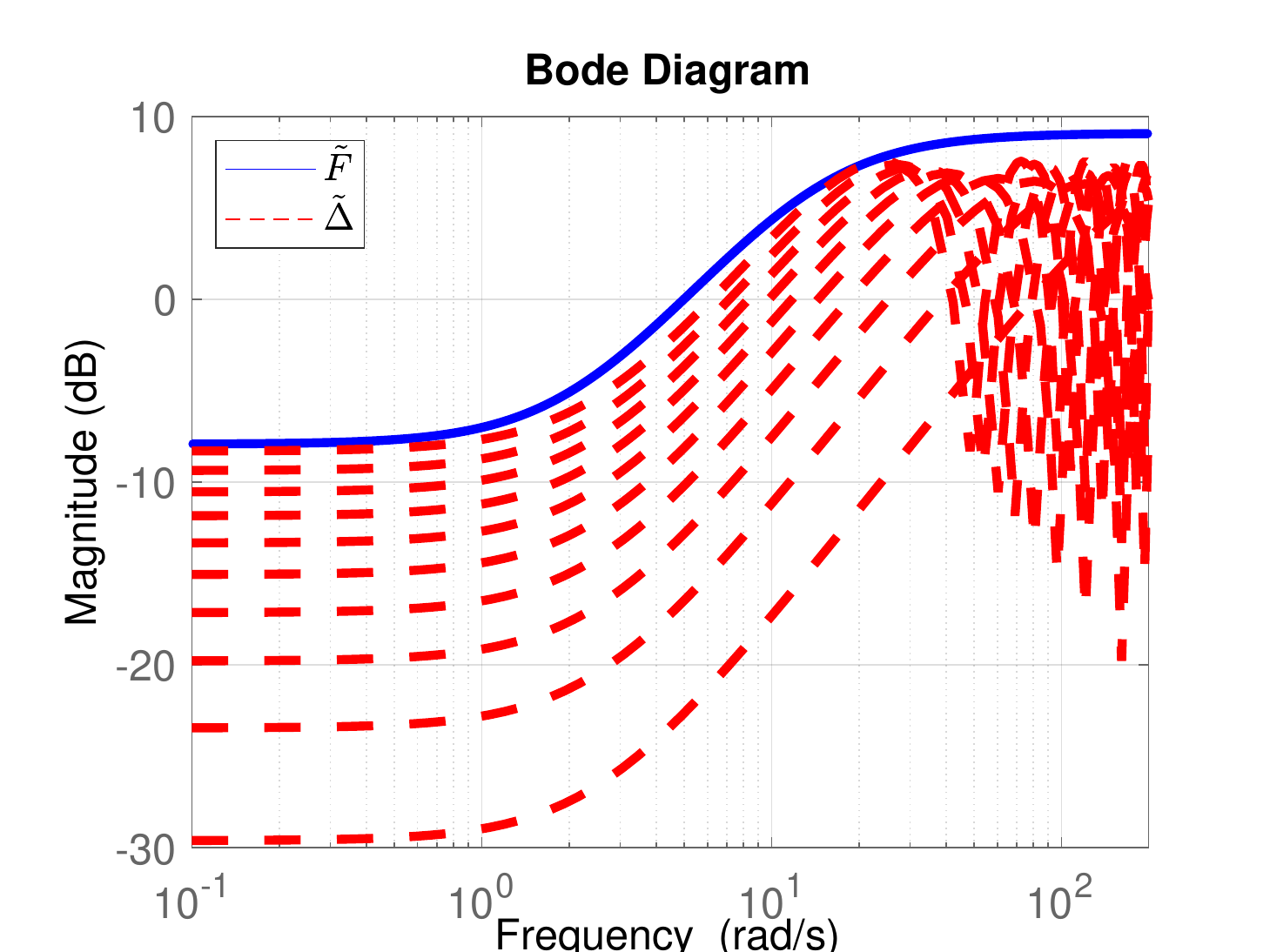}
  \caption{The delay perturbation is $\Delta(s)=( e^{-s\tau} -1 )$.
    The figure shows frequency responses of (shifted) perturbation
    $\tilde{\Delta}(s)=\Delta(s-\frac{\alpha}{2})$ with ten samples of
    delay (red dashed) and a bound $\tilde{F}$ (blue).}
  \label{fig:IQCDelayBounds}
\end{figure}

Equation~\ref{eq:ksafeECBF} gives the optimization for safe control of
the two-dimensional point mass using an exponential CBF.  This can be
adapted to include the $\alpha$-IQC using the approach in
Section~\ref{sec:RobustCBF}.  This leads to the following optimization
that merges the exponential CBF with the $\alpha$-IQC:
\begin{align}
\label{eq:ksafeECBFwithIQC}
\begin{split}
  & k_{safe}(x,x_F,r):= \arg \min_{u\in \mathcal{U}} \frac{1}{2} \| u - k(x,r) \|^2 \\
  & \mbox{s.t. } \ddot{h}(x,u) -\lambda I(x_F,u,w^*) \ge -\alpha^2 h(x) - 2 \alpha \dot{h}(x)
\end{split}
\end{align}
Here $\dot{h}(x)$ and $\ddot{h}(x,u)$ denote the expressions in
\eqref{eq:hdDI} and \eqref{eq:hddDI}.  Define
$\tilde{h} = \dot{h} + \alpha h$ so that the constraint in
\eqref{eq:ksafeECBFwithIQC} is
$\dot{\tilde{h}}-\lambda I \ge -\alpha \tilde{h}$.  It follows from
Theorem~\ref{thm:rcbf} that $\tilde{h}(0)\ge 0$ implies
$\tilde{h}(t) \ge 0$. Moreover, $\tilde{h}(t) \ge 0$ and $h(0)\ge 0$
imply $h(t)\ge 0$ based on the discussion in
Section~\ref{sec:impact}. Thus this optimization, if feasible at each
time, will yield safety.



Figure~\ref{fig:ECBFwithIQCPosition} shows the results of the nominal
exponential CBF controller (red dashed) and robust exponential CBF
(blue) on the point mass dynamics with delay of $\tau=0.13$sec.  The
plant has two inputs $(u_x,u_y)$ each of which has a delay.  An
$\alpha$-IQC for each direction was included for each delay with a
Lagrange multiplier $\lambda_x = \lambda_y =
0.1$. Figure~\ref{fig:ECBFwithIQCPosition} shows that the robust
exponential CBF controller takes a more cautious (conservative and
safe) path around the obstacle.  This accounts for the effect of the
unmodeled dynamics.

\begin{figure}[h!] 
  \centering
  \includegraphics[width=0.79\linewidth]{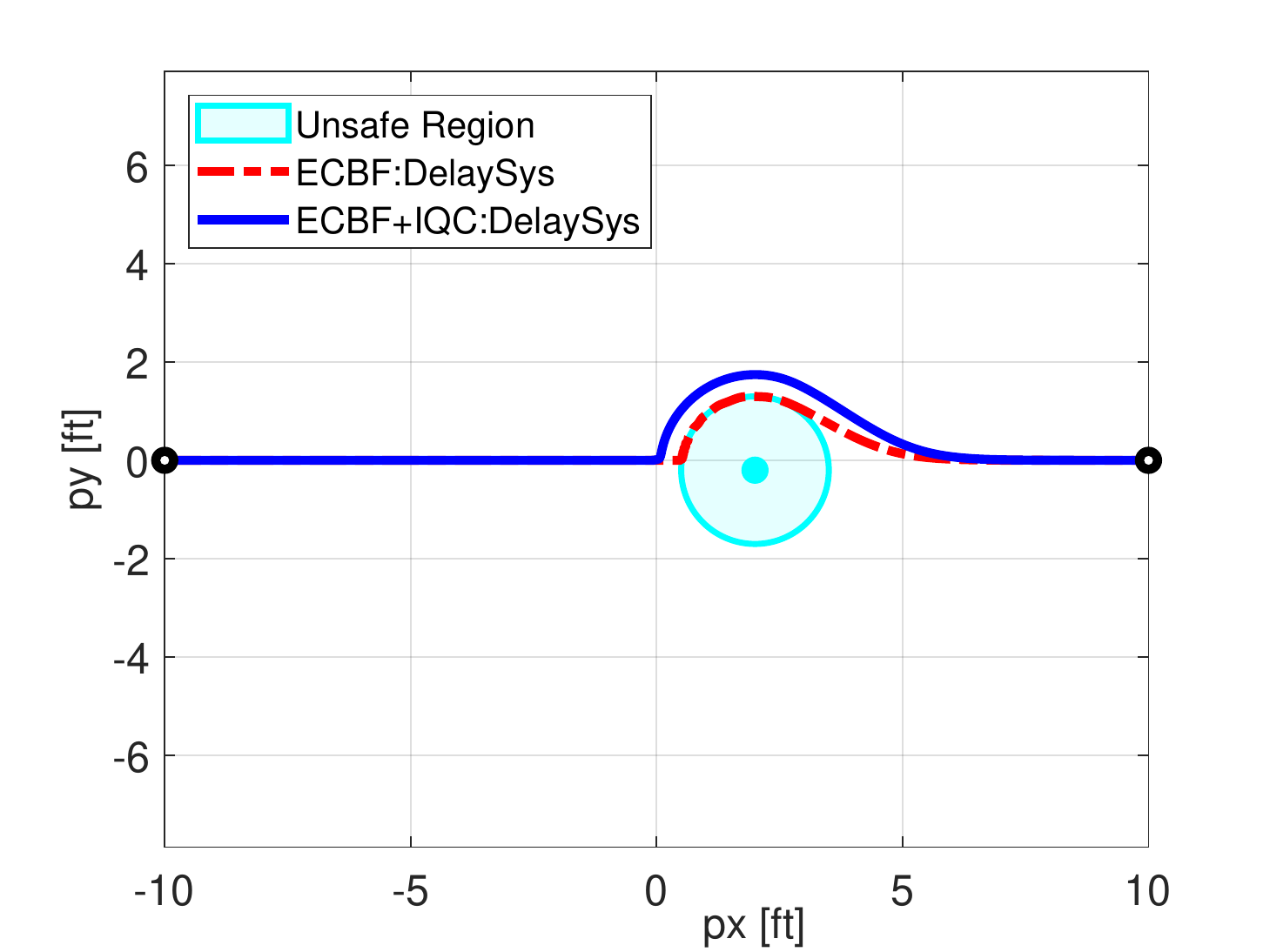}
  \caption{Position for exponential CBF (red dashed) and
    robust exponential CBF (blue) controllers on point mass
    with delay $\tau=0.13$sec.}
  \label{fig:ECBFwithIQCPosition}
\end{figure}

Figure~\ref{fig:ECBFwithIQCInputs} shows the control inputs with the
nominal exponential CBF controller (red dashed) and the robust version
(blue).  The robust version reduces the oscillations in the control
signals.  Smaller values for the Lagrange multipliers
$(\lambda_x,\lambda_y)$ further reduce the oscillations 
but also yield an even more conservative path around the
obstacle.

\begin{figure}[h!] 
  \centering
  \includegraphics[width=0.89\linewidth]{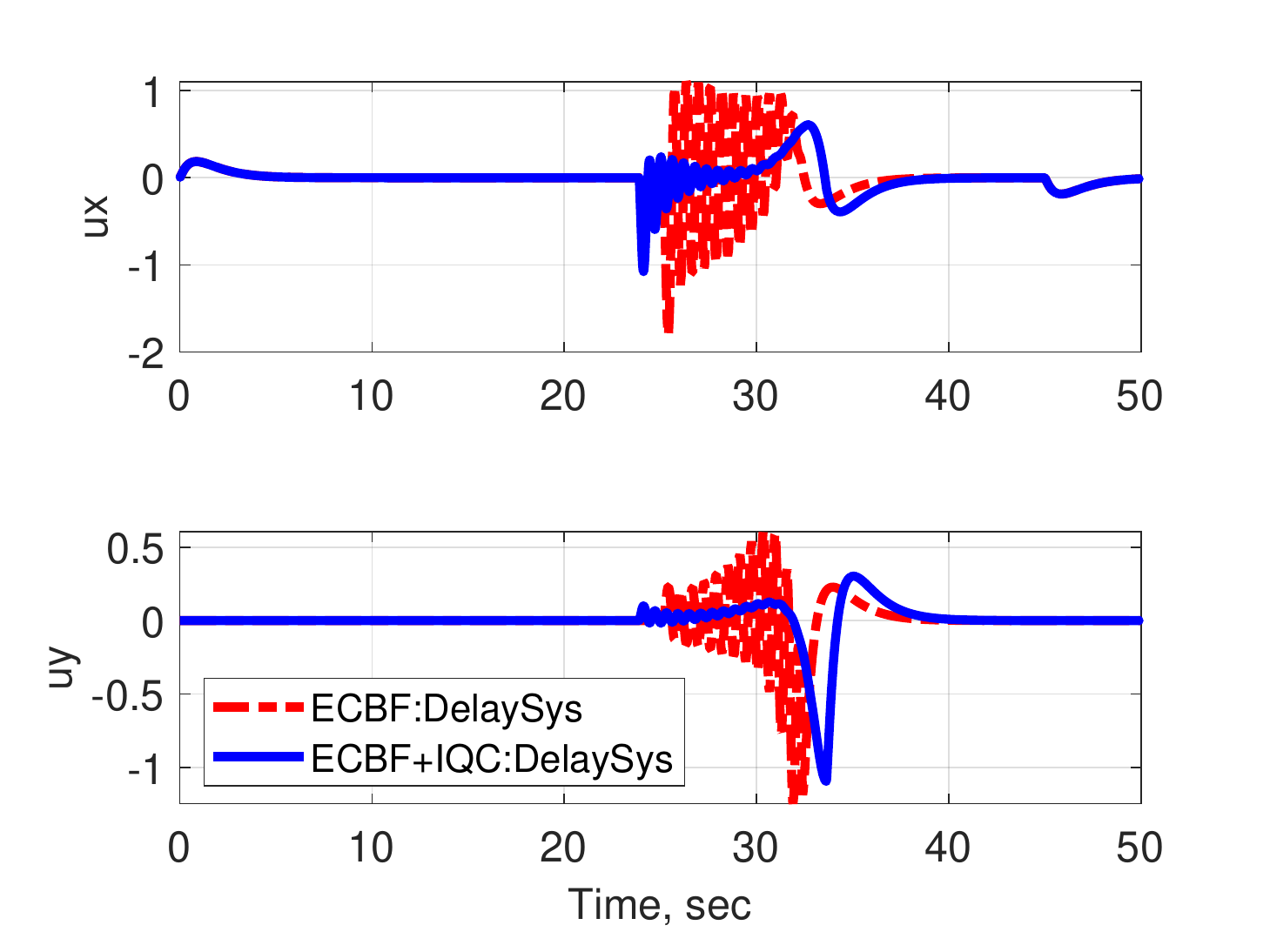}
  \caption{Control inputs for exponential CBF (red dashed) and robust
    exponential CBF (blue) controllers on point mass with
    delay $\tau=0.13$sec.}
  \label{fig:ECBFwithIQCInputs}
\end{figure}


\section{Conclusions}

This paper presented a method to design control barrier functions
(CBFs) that are robust to unmodeled dynamics at the plant input,
e.g. unmodeled actuator dynamics or time delays.  The approach uses
$\alpha$-IQCs to bound the input/output behavior of the uncertainty. A
robust CBF condition is derived using a version of the 
Gr\"{o}nwall-Bellman lemma. 

\section*{Acknowledgments}
This work was funded by the Ford/U-M Faculty Summer Sabbatical
Program. The author acknowledges useful discussions with A. Wiese,
Y. Rahman, A. Sharma, D. Sumer, M. Srinivasan, J. Buch, and
S.-C. Liao.

\bibliographystyle{IEEEtran}
\bibliography{CBFRefs}

\end{document}